\newtheorem{theorem}{Theorem}
\newtheorem{corollary}{Corollary}
\begin{document}

\title{Performance Analysis of Ambient RF Energy Harvesting: A Stochastic Geometry Approach}
 \author{Ian Flint$^{\dagger}$, Xiao Lu$^{\ddagger}$, Nicolas Privault$^{\dagger}$, Dusit Niyato$^{\ddagger}$, and Ping Wang$^{\ddagger}$	\\
    ~$^{\dagger}$ School of Physical $\&$ Mathematical Sciences, Nanyang Technological University, Singapore  \\
		~$^{\ddagger}$ School of Computer Engineering, Nanyang Technological University, Singapore \\
   }

\markboth{}{Shell \MakeLowercase{\textit{et al.}}: Bare Demo of
IEEEtran.cls for Journals}

\maketitle

\begin{abstract}
Ambient RF (Radio Frequency) energy harvesting technique has recently been proposed as a potential solution to provide proactive energy replenishment for wireless devices. This paper aims to analyze the performance of a battery-free wireless sensor powered by ambient RF energy harvesting using a stochastic-geometry approach. Specifically, we consider a random network model in which ambient RF sources are distributed as a Ginibre $\alpha$-determinantal point process which recovers the Poisson point process when $\alpha$ approaches zero. We characterize the expected RF energy harvesting rate. We also perform a worst-case study which derives the upper bounds of both power outage and transmission outage probabilities.
Numerical results show that our upper bounds are accurate and that better performance is achieved when the distribution of ambient sources exhibits stronger repulsion.

\end{abstract}

\emph{Index terms- RF energy harvesting, sensor networks, determinantal point process, Poisson point process, Ginibre model}  .

\section{Introduction}
 
Ambient RF energy harvesting technique \cite{XLuSurvey,X.2014Lu,X2014Lu} offers the capability of converting the received RF signals into electricity. Therefore, it has recently emerged as an alternative method to operate low-power devices~\cite{H2010Nishimoto}, such as wireless sensors. The fact that ambient RF energy harvesting aims to capture and recycle the environmental energy such as broadcast TV, radio and cellular signals, which are essentially free and universally present, makes this technique even more appealing. In this context, wireless devices powered by ambient RF energy are enabled for battery-free implementation, and has a potential for perpetual operation. Experiment with ambient RF energy harvesting in~\cite{A2009Sample} shows that $60\mu W$ is harvested from a TV tower that are $4.1km$ away. 
\cite{D2010Bouchouicha} measures the ambient RF power density from $680MHz$ to $3.5GHz$ and shows that the average power density from $1GHz$ to $3.5GHz$ is in the order of $63\mu W/m^{2}$. Detected $6.3km$ away from Tokyo Tower, the RF-to-DC conversion efficiency is demonstrated to be about $16\%$, $30\%$ and $41\%$ when the input power is $-15dBm$, $-10dBm$ and $-5dBm$, respectively~\cite{R2003Shigeta}.  

Geometry approaches have been applied to analyze RF energy harvesting performance in cellular network~\cite{K2014Huang}, relay network~\cite{I2014Krikidis} and cognitive radio network~\cite{S2013Lee}.~\cite{K2014Huang} investigates tradeoffs among transmit power and density of mobiles and wireless charging stations which both locate as homogeneous Poisson Point Process (PPP).~\cite{I2014Krikidis} studies the impact of cooperative density and relay selection in a large-scale network with transmitter-receiver pairs distributed as PPP. The authors in~\cite{S2013Lee} study a cognitive radio network where the primary and secondary networks are distributed as independent homogeneous PPPs. The secondary network is powered by the energy opportunistically harvested from nearby transmitters in the primary network. Under the outage probability requirements for both coexisting networks, the maximum throughput of the secondary network is analyzed. Prior literature mainly focuses on performance analysis on RF-powered wireless devices using PPP. 
The study in \cite{A1405.2013H} investigates cognitive and energy harvesting-based device-to-device (D2D) communication underlying cellular networks. The authors adopt independent PPPs to model the locations of BSs, cellular mobiles, and D2D devices. Specifically, two spectrum access policies designed for cellular BSs, namely, random spectrum access and prioritized spectrum access, are studied.
In this paper, we generalize the conventional PPP analytical framework to a $\alpha$-determinantal point process (DPP), where PPP is a special case when $\alpha$ approaches zero. We focus more specifically on the so-called Ginibre DPP which offers many advantages in terms of modeling capability and ease of simulation~\cite{DecreusefondFlintVergne}.




To address this issue, this paper analyzes the point-to-point transmission between an RF-powered sensor node and a data sink. The sensor node needs to harvests RF energy from ambient RF sources (e.g., cellular mobiles). The ambient RF sources are modeled with a Ginibre $\alpha$-DPP. The sensor transmits to the data sink using instant harvested RF energy. A power outage happens if the instant harvested energy fails to meet the circuit power consumption of the sensor. Moreover, if the minimum transmission rate requirement cannot be fulfilled, a transmission outage occurs. Based on the above model, we first derive the expression of expected RF energy harvesting rate, then characterize the upper bounds of both power outage probability and transmission outage probability in a closed form. The performance analysis provides a useful insight into the tradeoff among various network parameters.

The remainder of this paper is organized as follows. Section~II introduces the network model, geometry model of ambient RF sources and performance metrics. Section~III estimates the performance metrics of the sensor for both Ginibre $\alpha$-DPP and PPP modeling of ambient RF sources. The numerical results are demonstrated in Section~IV, followed by the conclusion in Section~V.

\section{System Model}

\subsection{Network Model}
\begin{figure}
\centering
\includegraphics[width=0.45\textwidth]{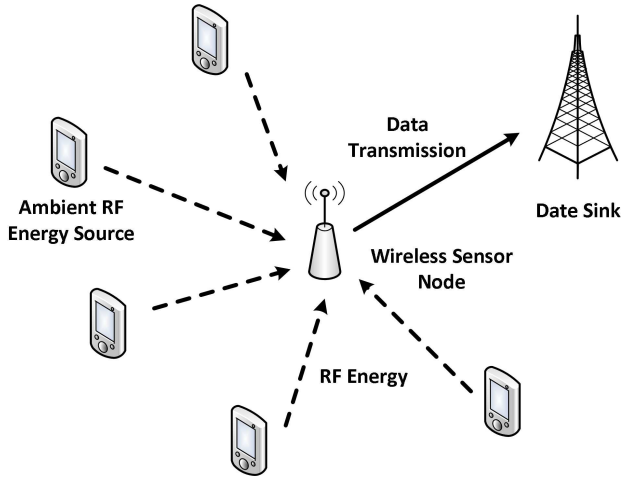}
\caption{A network model of ambient RF energy harvesting.} \label{fig:systemmodel}
\end{figure}

We consider a sensor node $i$ harvesting RF energy from ambient RF energy sources. We assume that the ambient RF energy sources are distributed as a Ginibre $\alpha$-DPP, which will be specified in detail in Section~III. We assume that the sensor node is solely powered by the harvested RF energy to supply its operations for data transmission. 
The sensor contains an energy harvester and an information transmitter, which are equipped with separated antennas, so that they can function independently and concurrently. In other words, the sensor is able to perform data transmission and RF energy harvesting simultaneously.  The instantaneous harvested energy is first used to operate the sensor circuit and then the surplus energy is provided for information transmission.

The RF energy harvesting rate by the sensor node $i$ from the RF energy source $k$ in a free-space channel $P^{i,k}_{\mathrm{H}}$ can be obtained based on the Friis equation~\cite{Visser2013} as follows:
\begin{equation}
	P^{i,k}_{\mathrm{H}}	=	\beta_{i} P^k_{\mathrm{S}} \frac{G^{k}_{\mathrm{S}} G^{i}_{\mathrm{H}} ( \lambda_{k} )^{2}}{(4\pi d_{i,k})^{2}}
\end{equation}
where $\beta_{i}$ is the RF-to-DC power conversion efficiency of the sensor node $i$. $P^k_{\mathrm{S}}$ is the transmit power of RF energy source $k$. $G^{k}_{\mathrm{S}}$ is the transmit antenna gain of RF energy source $k$. $G^{i}_{\mathrm{H}}$ is the receive antenna gain  of sensor node $i$. $\lambda_{k}$ is the wavelength emitted at RF energy source $k$. $d_{i,k}$ is the distance between the transmit antenna of RF energy source $k$ and the receiver antenna of sensor node $i$. Let ${\mathbf{x}}_i \in {\mathbb{R}}^2$ and ${\mathbf{x}}_k \in {\mathbb{R}}^2$ denote the coordinates of the sensor node $i$ and RF energy source $k$, respectively. The distance can be obtained from $d_{i,k} = \epsilon + \lVert	{\mathbf{x}}_i - {\mathbf{x}}_k	\rVert$, where $\epsilon$ is a fixed (small) parameter which ensures that the associated harvested RF power is finite in expectation. Physically, $\epsilon$ is the closest distance that the RF energy sources can be to the sensor node.

Then, the aggregated RF energy harvesting rate by sensor node $i$ can be obtained from
\begin{equation}
	P^i_{\mathrm{H}} = \sum_{ k \in {\mathcal{K}} } P^{i,k}_{\mathrm{H}}	=	\sum_{ k \in {\mathcal{K}} }	\beta_{i} P^k_{\mathrm{S}} \frac{G^{k}_{\mathrm{S}} G^{i}_{\mathrm{H}} ( \lambda_{k} )^{2}}{(4\pi 	(\epsilon + \lVert {\mathbf{x}}_i - {\mathbf{x}}_k \rVert)	)^{2}}
\end{equation}
where ${\mathcal{K}}$ is a random set consisting of all RF energy sources. We assume that $\mathcal{K}$ is a point process~\cite{Kallenberg}.


The sensor consumes a base circuit power, denoted as $P^i_{\mathrm{C}}$. Following practical models~\cite{G2009Miao}, the circuit power consumption of the sensor is assumed to be fixed.
The transmit power of sensor node $i$ is then given by $P^i_{\mathrm{T}} = \left[	P^i_{\mathrm{H}} - P^i_{\mathrm{C}} \right]^+$ where $[x]^+ = \max(0, x)$. The maximum transmission rate of sensor node $i$ is obtained as follows:
\begin{equation}
	C_i	=	W \cdot \log_2	\Big(	1 + h_i	\frac {	P^i_{\mathrm{T}}	}	{\sigma^2} 	\Big)
\end{equation}
where W is the transmission bandwidth. $h_i$ denotes the channel gain between the transmit antenna of sensor node $i$ and the receive antenna of data sink. $\sigma^2$ is the noise power.

\subsection{Geometric Modeling of Ambient RF Energy Sources}

As an extension of the Poisson setting, we model the locations of RF energy sources using a point process $\mathcal{K}$ on an observation window $O:=\mathcal{B}(0,R)$ which is the closed ball centered at the origin and of radius $R>0$. In other terms, $\mathcal{K}$ is an almost surely finite random collection of points inside $\mathcal{B}(0,R)$. We refer to~\cite{Kallenberg} and~\cite{DaleyVereJones} for the general theory of point processes. 

We let $\alpha=-\frac{1}{m}$ for $m\in\mathbb{N}^*$ and we focus on the Ginibre $\alpha$-DPP, which is a class of point processes well-suited for applications. The Ginibre process is a type of $\alpha$-DPP (see~\cite{ShiraiTakahashi} for definitions and technical results). 
The Ginibre process is defined by the so-called Ginibre kernel given by
\begin{equation}
\label{eq:ginibre}
K(x,y)=\rho e^{\pi\rho x \bar{y}} e^{-\frac{\pi\rho}{2}( |x|^2 + |y|^2)},
\quad 
 x,y \in W=\mathcal{B}(0,R). 
\end{equation}
This kernel is that of the usual Ginibre process defined, e.g., in~\cite{DecreusefondFlintVergne}, to which we have applied a homothety of parameter $\sqrt{\pi\rho}>0$ : $x\mapsto x/(\sqrt{\pi\rho})$. Here, $\rho>0$ is the density of the point process. Next we recall a few features of the Ginibre process. 
\begin{itemize} 
\item The intensity function of the Ginibre process 
 is 
 given by 
\begin{equation} 
\label{beq} 
\rho^{(1)}(x)=K(x,x)= \rho, 
\end{equation} 
 c.f.~\cite{ShiraiTakahashi}. This means that the average number of points in a bounded set $B\subset \mathcal{B}(0,R)$ is $\int_B \rho\,\mathrm{d}x\mathrm{d}y$.
\\ 
 
\item 
 The covariance of any 
 $\alpha$-DPPs of kernel $K$ is given by 
\begin{equation*}
\mathrm{Cov}(\mathcal{K}(A),\mathcal{K}(B))=\alpha\int_{A\times B}|K(x,y)|^2\,\mathrm{d}x\mathrm{d}y, 
\end{equation*}
 where 
 $\mathcal{K}(A)$ and $\mathcal{K}(B))$ 
 denote the random number of point process 
 points located within the disjoint bounded sets 
 $A,B\subset\mathbb{R}^2$. 
\\ 
 
\item 
 For every bounded set $B\subset\mathbb{R}^2$ we have 
\begin{equation}
\label{aeq} 
 Pr(\mathcal{K}\cap B = \emptyset ) = \mathrm{Det}(\mathrm{Id}+\alpha K_B)^{-1/\alpha},
\end{equation}
 where $K_B$ is the operator restriction of $K$ to the space $\mathrm{L}^2(B)$ of square integrable functions on $B$. Here, $\mathrm{Id}$ is the identity operator on $\mathrm{L}^2(B)$ and $\mathrm{Det}$ stands for the Fredholm determinant which is defined e.g. in~\cite{Brezis}. 
\end{itemize} 
Since $\alpha < 0$, $\mathcal{K}(A)$ and $\mathcal{K}(B)$ are negatively correlated and the associated $\alpha$-DPP is known to be locally Gibbsian, see e.g.~\cite{GeorgiiYoo}, therefore it is a kind of repulsive point process. In particular, the Ginibre $\alpha$-DPP exhibits more repulsion when $\alpha$ is close to $-1$. As $\alpha\rightarrow 0$, $\mathcal{K}(A)$ and $\mathcal{K}(B)$ tend to not be correlated, and the corresponding point process converges to the PPP, c.f.~\cite{ShiraiTakahashi}.

 We will write $\mathcal{K}\sim\mathrm{Det}(\alpha,K,\rho)$ 
 when $\mathcal{K}$ is an $\alpha$-DPP with 
 kernel $K$ defined in \eqref{eq:ginibre} and density with respect to the Lebesgue measure $\rho$. 
 The spectral theorem for Hermitian and compact operators 
 yields the following decomposition for the kernel of $K$:
\begin{equation*}
K(x,y)=\sum_{n\ge 0} \lambda_n \varphi_n(x)\overline{\varphi_n(y)},
\end{equation*}
where $(\varphi_i)_{i\ge 0}$ is a basis of $\mathrm{L}^2({O},\lambda)$, and $(\lambda_i)_{i\ge 0}$ the corresponding eigenvalues. 
 In e.g.~\cite{DecreusefondFlintVergne} it is shown that
 the eigenvalues of the Ginibre point process on $O=\mathcal{B}(0,R)$ 
 are given by 
\begin{equation}
\label{eq:eigenvalues}
\lambda_n = \frac{\Gamma(n+1, \pi\rho R^2)}{n!},
\end{equation}
 where 
\begin{equation}
\label{eq:defgamma}
\Gamma(z,a) := \int_0^a e^{-t} t^{z-1}\,\mathrm{d}t, 
\qquad 
z \in \mathbb{C}, \quad a \ge 0, 
\end{equation}
 is the lower incomplete Gamma function. 
 On the other hand, the eigenvectors of $K$ are given by
\begin{equation*}
\phi_n(z) :=  \frac{1}{\sqrt{\lambda_n}}\frac{\sqrt{\rho}}{\sqrt{ n!}} e^{-\frac{\pi\rho}{2} | z |^2} (\sqrt{\pi\rho} z)^n, 
\qquad 
 n \in \mathbb{N}, 
\quad 
 z \in O. 
\end{equation*}
 We refer to~\cite{DecreusefondFlintVergne} for 
 further mathematical details on the Ginibre point process. 
 
Lastly, we emphasize that the Ginibre $\alpha$-DPP is stationary, in the sense that its distribution is invariant with respect to translations, c.f.~\cite{DecreusefondFlintVergne}. Hence, our choice of $O=\mathcal{B}(0,R)$ centered at the origin instead of $\mathbf{x}_i$ is justified.

\subsection{Performance Metric} \label{sec:metrics}
We define the performance metrics of the sensor node as the expected energy harvesting rate, power outage probability and transmission outage probability. 
The expected RF energy harvesting rate is defined as:
\begin{equation}\label{eq:expect}
 \varphi \triangleq \mathbb{E} \left[ P^{i}_{\mathrm{H}} \right]	.
\end{equation}

An power outage occurs when the sensor node becomes inactive due to lack of enough energy supply. The power outage probability is then given by,
\begin{equation}
\label{eq:phi}
	\phi	=	Pr \left(	 P^i_{\mathrm{H}}	<	P^i_{\mathrm{C}}	\right)
\end{equation}

Let $m\ge 0$ denote the minimum transmission rate requirement. If the sensor fails to achieve this requirement, transmission outage occurs. The transmission outage probability can be defined as,
\begin{equation}
\label{eq:psi}
	\psi	=	Pr \left(	C_i	<	m	\right)
\end{equation}


\section{Performance Analysis}
\label{sec:Analysis}
\subsection{Ginibre $\alpha$-determinantal point process}
In this section we estimate the metrics defined in Section~\ref{sec:metrics}
when $\mathcal{K}\sim\mathrm{Det}(\alpha,K,\rho)$ is the Ginibre DPP with parameter $\alpha=-\frac{1}{m}$, where $m\in\mathbb{N}^*$. 
We assume additionally that $P^k_{\mathrm{S}}:=P_{\mathrm{S}}$, $G^{k}_{\mathrm{S}}:=G_{\mathrm{S}}$, and $\lambda_k:=\lambda$ do not depend on $k$.

For the estimation of \eqref{eq:phi} and \eqref{eq:psi}, we might proceed by Monte Carlo simulation of the underlying $\alpha$-DPP. Simulation of $\alpha$-DPPs when $\alpha <0$ is done by using the Schmidt orthogonalization algorithm developed in full generality in~\cite{Hough}, and specifically in~\cite{DecreusefondFlintVergne} for the Ginibre point process. The simple generalization to all $\alpha <0$ can be found in the recent survey~\cite{DecreusefondFlintPrivaultTorrisi}, and additional details on DPP can be found in~\cite{DecreusefondFlintPrivaultTorrisi2}. 

Monte Carlo methods can however be quite time-consuming in practice, 
especially when it is repeatedly applied to multiple values of the parameters.
Thus, in many applications, it is of major interest to have some practical bounds at hand, such as the ones which we present now. First, we obtain the expected RF energy harvesting rate in the following theorem.
\begin{theorem}
The expected RF energy harvesting rate can be explicitly computed as 
\begin{multline}
\label{eq:evenergy}
\mathbb{E}[P_H^i] =2 \pi \beta_{i}  P_{\mathrm{S}} \frac{G^{k}_{\mathrm{S}}G^{i}_{\mathrm{H}} \lambda^{2}}{(4\pi )^{2}} \\
\rho\left(\frac{\epsilon}{R+\epsilon}+\ln(R+\epsilon)-1-\ln(\epsilon)\right),
\end{multline}
\end{theorem}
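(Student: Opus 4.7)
The plan is to reduce the claim to a first-moment computation using Campbell's formula, since the expectation $\mathbb{E}[P_H^i]$ involves only a linear functional of the point process. Concretely, for any point process $\mathcal{K}$ with first-order intensity $\rho^{(1)}(x)$, one has
\begin{equation*}
\mathbb{E}\Bigl[\sum_{k\in\mathcal{K}} f(\mathbf{x}_k)\Bigr] = \int_{O} f(x)\,\rho^{(1)}(x)\,\mathrm{d}x
\end{equation*}
for any nonnegative measurable $f$. Thus only the intensity of the Ginibre $\alpha$-DPP, and not any higher-order correlation structure, will enter the computation.

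The first step is to use equation~\eqref{beq}, which tells us that the intensity of the Ginibre DPP on $O=\mathcal{B}(0,R)$ is constant equal to $\rho$. Applied to $f(x) = \beta_i P_{\mathrm{S}} G_{\mathrm{S}} G^{i}_{\mathrm{H}} \lambda^2 / (4\pi(\epsilon + \|\mathbf{x}_i - x\|))^2$, Campbell's formula then gives
\begin{equation*}
\mathbb{E}[P_H^i] = \rho\,\frac{\beta_i P_{\mathrm{S}} G_{\mathrm{S}} G^{i}_{\mathrm{H}} \lambda^2}{(4\pi)^2} \int_{\mathcal{B}(0,R)} \frac{\mathrm{d}x}{(\epsilon + \|\mathbf{x}_i - x\|)^2}.
\end{equation*}
By the stationarity of the Ginibre $\alpha$-DPP (already invoked after \eqref{eq:eigenvalues}), we may take $\mathbf{x}_i = 0$ without loss of generality.

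The second step is to evaluate the remaining deterministic integral. Switching to polar coordinates yields
\begin{equation*}
\int_{\mathcal{B}(0,R)} \frac{\mathrm{d}x}{(\epsilon + \|x\|)^2} = 2\pi \int_0^R \frac{r}{(\epsilon + r)^2}\,\mathrm{d}r.
\end{equation*}
A substitution $u = \epsilon + r$ reduces this to $\int_\epsilon^{R+\epsilon}(1/u - \epsilon/u^2)\,\mathrm{d}u$, which integrates elementarily to $\ln(R+\epsilon) - \ln(\epsilon) + \epsilon/(R+\epsilon) - 1$. Substituting back gives precisely the right-hand side of~\eqref{eq:evenergy}.

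There is no real obstacle here: the argument is essentially a textbook application of Campbell's theorem plus a one-variable integral. The only point worth emphasizing is the observation that neither the negative correlations of the Ginibre process nor the eigen-decomposition of $K$ are needed for the expected harvesting rate; they would only become relevant for the outage probabilities, where second-order or full void-probability information enters. The small parameter $\epsilon>0$ plays its expected role of keeping the integrand bounded at the origin, which ensures finiteness of $\mathbb{E}[P_H^i]$.
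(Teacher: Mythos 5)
Your proof is correct and follows essentially the same route as the paper: Campbell's formula with the constant intensity $\rho^{(1)}(x)=\rho$ from \eqref{beq}, a polar change of variables, and explicit evaluation of $\int_0^R r/(\epsilon+r)^2\,\mathrm{d}r$. The only addition is your explicit appeal to stationarity to set $\mathbf{x}_i=0$, which the paper justifies earlier when choosing $O=\mathcal{B}(0,R)$ centered at the origin.
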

\begin{proof}
We have
\begin{equation*}
\mathbb{E}[P_H^i]=\beta_{i}  P_{\mathrm{S}} \frac{G^{k}_{\mathrm{S}}G^{i}_{\mathrm{H}} \lambda^{2}}{(4\pi )^{2}} \int_{W} \frac{\rho^{(1)}(x)}{(\epsilon+\|x\|)^2}\,\mathrm{d}x
\end{equation*}
by Campbell's formula~\cite{Kallenberg}, where 
$\rho^{(1)}(x)=K(x,x)=\rho$ 
 is the intensity function of $\mathcal{K}$ given by \eqref{beq},
 which yields 
\begin{equation*}
\mathbb{E}[P_H^i]=\beta_{i}  P_{\mathrm{S}} \frac{G^{k}_{\mathrm{S}}G^{i}_{\mathrm{H}}  \lambda^{2}}{(4\pi )^{2}} 2 \pi \int_{0}^R \rho \frac{r}{(\epsilon+r)^2}\,\mathrm{d}r,
\end{equation*}
 by polar change of variable,
 and the integral on the r.h.s. is computed explicitly as
\begin{equation*}
 \int_{0}^R\frac{r}{(\epsilon+r)^2}\,\mathrm{d}r=
\left(\frac{\epsilon}{R+\epsilon}+\ln(R+\epsilon)-1-\ln(\epsilon)\right),
\end{equation*}
which yields the result.
\end{proof}

Note here that the expected RF energy harvesting rate does not depend on the parameter $\alpha$ of the DPP.

Now, we give a practical upper bound to the probability that the sensor node becomes inactive due to lack of energy supply $Pr \left(	 P^i_{\mathrm{H}}	<	P^i_{\mathrm{C}}	\right)$. Specifically, we prove the following:
\begin{theorem}
\label{thm:estimationphi}
 Let us define 
\begin{eqnarray}
\gamma_i:=\sqrt{ \frac{\beta_{i} P_{\mathrm{S}}G^{k}_{\mathrm{S}} G^{i}_{\mathrm{H}}  \lambda^{2}}{(4\pi)^{2}P^i_{\mathrm{C}} } }.
\end{eqnarray}
Then, the following bound holds:
\begin{align}
& Pr \left(	 P^i_{\mathrm{H}}	<	P^i_{\mathrm{C}}	\right) \nonumber \\
& \le  \left(\prod_{n\ge 0} (1+\alpha  \frac{\Gamma(n+1, \pi\rho\inf(R,\gamma_i)^2)}{n!})\right)^{-1/\alpha}, \label{dpp_po}
\end{align}
where $\Gamma(z,a)$ is defined in \eqref{eq:defgamma}.
\end{theorem}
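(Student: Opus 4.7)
The plan is to transform the event $\{P_H^i < P_C^i\}$ into a void event for the point process on a ball, and then apply the void probability formula \eqref{aeq} combined with the known spectral data of the Ginibre kernel.

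First I would normalize by $P_C^i$ to write
\begin{equation*}
\frac{P_H^i}{P_C^i} \;=\; \sum_{k\in\mathcal{K}} \frac{\gamma_i^2}{(\epsilon+\|\mathbf{x}_k\|)^2},
\end{equation*}
where I have taken the sensor node at the origin (justified by the stationarity of the Ginibre $\alpha$-DPP as recalled just before Section~II.C). The crucial elementary observation is then that a \emph{single} source located inside a small enough ball is already sufficient to prevent power outage: if some $\mathbf{x}_k\in\mathcal{K}$ satisfies $\epsilon+\|\mathbf{x}_k\|\le \gamma_i$, the corresponding term alone is at least $1$, so $P_H^i\ge P_C^i$. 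Taking contrapositives, the outage event is contained in the void event
\begin{equation*}
\{P_H^i<P_C^i\}\;\subset\;\{\mathcal{K}\cap\mathcal{B}(0,r_i)=\emptyset\},
\end{equation*}
for the radius $r_i:=\min(R,\gamma_i)$ (the $\inf$ with $R$ being needed because $\mathcal{K}$ only lives inside $\mathcal{B}(0,R)$, and the small constant $\epsilon$ being absorbed in the stated bound). In particular,
\begin{equation*}
Pr(P_H^i<P_C^i)\;\le\;Pr(\mathcal{K}\cap\mathcal{B}(0,r_i)=\emptyset).
\end{equation*}

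Next I would apply the void probability formula \eqref{aeq}, which gives
\begin{equation*}
Pr(\mathcal{K}\cap\mathcal{B}(0,r_i)=\emptyset) \;=\; \mathrm{Det}\bigl(\mathrm{Id}+\alpha K_{\mathcal{B}(0,r_i)}\bigr)^{-1/\alpha}.
\end{equation*}
By the spectral theorem for the compact Hermitian operator $K_{\mathcal{B}(0,r_i)}$, the Fredholm determinant factorizes as
\begin{equation*}
\mathrm{Det}\bigl(\mathrm{Id}+\alpha K_{\mathcal{B}(0,r_i)}\bigr)
\;=\; \prod_{n\ge 0}\bigl(1+\alpha\lambda_n(r_i)\bigr),
\end{equation*}
where $\lambda_n(r_i)$ are the eigenvalues of the Ginibre kernel restricted to the ball of radius $r_i$. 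The formula \eqref{eq:eigenvalues}, applied with $r_i$ in place of $R$, gives exactly $\lambda_n(r_i)=\Gamma(n+1,\pi\rho r_i^2)/n!$, and substituting this back yields the right-hand side of \eqref{dpp_po}.

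The only genuinely delicate step is the passage from the outage event to the void event with the correct radius: the argument is tight only up to the additive constant $\epsilon$ coming from $d_{i,k}=\epsilon+\|\mathbf{x}_i-\mathbf{x}_k\|$, so one has to be careful that replacing $\gamma_i-\epsilon$ by $\gamma_i$ only weakens the bound (since enlarging the forbidden ball only decreases the void probability, this would actually go the wrong way, so in practice one works with $\min(R,\gamma_i-\epsilon)$ or drops $\epsilon$ at the modelling stage). Everything else is routine: the inclusion of events, the use of \eqref{aeq}, and the eigenvalue identity \eqref{eq:eigenvalues} for the Ginibre kernel on a disk assemble into the stated product bound.
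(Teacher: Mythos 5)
Your proof follows essentially the same route as the paper's: the outage event is contained in the void event of the ball of radius $\gamma_i-\epsilon$ (since the sum being below $P^i_{\mathrm{C}}$ forces every nonnegative term below $P^i_{\mathrm{C}}$), after which the void probability formula \eqref{aeq} and the eigenvalue identity \eqref{eq:eigenvalues} give the product bound. The $\epsilon$-subtlety you correctly flag is resolved in the paper by keeping $\inf(R,\gamma_i-\epsilon)$ throughout and then letting $\epsilon\to 0$ on the right-hand side, invoking continuity of the bound in $\epsilon$.
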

\begin{proof}
Let us define
\begin{equation*}
f(\mathbf{x}_k):=\beta_{i} P_{\mathrm{S}} \frac{G^{k}_{\mathrm{S}} G^{i}_{\mathrm{H}} \lambda^{2}}{(4\pi (\epsilon+\|\mathbf{x}_k\|))^{2}},
\end{equation*}
for $k\in\mathcal{K}$. 
\begin{align*}
Pr \left(	 P^i_{\mathrm{H}}	<	P^i_{\mathrm{C}}	\right) &=Pr(\sum_{k\in\mathcal{K}} f(k) \le P^i_{\mathrm{C}})\\
	&\le Pr(\forall k \in\mathcal{K},\ f(\mathbf{x}_k) \le P^i_{\mathrm{C}})\\
	&= Pr(\forall k \in\mathcal{K},\ \|\mathbf{x}_k\| \ge \gamma_i -\epsilon )\\
	&= Pr(\mathcal{K}\cap\mathcal{B}(0,\gamma_i-\epsilon) = \emptyset ),
\end{align*}
where we have chosen $\epsilon$ such that $\gamma_i-\epsilon\ge 0$. 
Thus by \eqref{aeq} we obtain
\begin{equation*}
Pr \left(	 P^i_{\mathrm{H}}	<	P^i_{\mathrm{C}}	\right) \le \mathrm{Det}(\mathrm{Id}+\alpha K_{\mathcal{B}(0,\gamma_i-\epsilon)})^{-1/\alpha}.
\end{equation*}
Since in our case $K$ is the Ginibre kernel, the eigenvalues of $K$ are given by \eqref{eq:eigenvalues}. By standard properties of the Fredholm determinant which can be found e.g. in~\cite{Brezis}, we find
\begin{multline*}
Pr \left(	 P^i_{\mathrm{H}}	<	P^i_{\mathrm{C}}	\right) \\
\le  \left(\prod_{n\ge 0} \left(1+\alpha  \frac{\Gamma(n+1, \pi\rho \inf(R,\gamma_i-\epsilon)^2)}{n!}\right)\right)^{-1/\alpha},
\end{multline*}
and the result follows by letting $\epsilon$ go to zero on the r.h.s. of the previous equation, since the associated function of $\epsilon$ is continuous.
\end{proof}

It should be noted that the eigenvalues in Theorem~\ref{thm:estimationphi} are in decreasing order, and decrease exponentially when $n \ge  \pi\rho \inf(R,\gamma_i)^2$, see~\cite{DecreusefondFlintVergne} for details. Hence, the product which appears in Theorem~\ref{thm:estimationphi} is very well approximated by 
\begin{eqnarray}
  \left(\prod_{n\ge 0}^N \left(1+\alpha  \frac{\Gamma(n+1,  \pi\rho\inf(R,\gamma_i)^2)}{n!}\right)\right)^{-1/\alpha}, \nonumber
\end{eqnarray}
 where $N \gg \pi\rho\inf(R,\gamma_i)^2$.
 
The variations of the bound obtained in Theorem~\ref{thm:estimationphi} with respect to $\alpha$ is now explicited. It is easy to show that
\begin{eqnarray}
 \frac{\mathrm{d}}{\mathrm{d}\alpha} \ln\left[\left(\prod_{n\ge 0} (1+\alpha \lambda_n)\right)^{-1/\alpha}\right] \nonumber \\
	=  \frac{1 }{\alpha^2}  \sum_{n\ge 0} \frac{(1+\alpha \lambda_n)\ln(1+\alpha \lambda_n) - \alpha \lambda_n}{1+\alpha \lambda_n} \ge 0, \nonumber
 \end{eqnarray}
 which means that the bound of Theorem~\ref{thm:estimationphi} is lowest when $\alpha=-1$, i.e. when repulsion is maximal, and increases with $\alpha$. 

Next, in order to estimate the transmission outage probability $Pr \left(	R_i	<	m	\right)$, it suffices to apply Theorem~\ref{thm:estimationphi}. Specifically, we have
\begin{theorem}
Let we define
\begin{equation}
\label{eq:gammam}
	\gamma_i^m:=\sqrt{ \frac{\beta_{i} P_{\mathrm{S}}G^{k}_{\mathrm{S}} G^{i}_{\mathrm{H}} \lambda^{2}}{(4\pi)^{2}\left(P^i_{\mathrm{C}}+ \frac{\sigma^2}{h_i}\left(2^m-1\right)\right) } },
\end{equation}
then we obtain
\begin{align}
\label{eq:boundpsi}
	Pr \left(	R_i	<	m	\right) &= Pr\left( \left[	P^i_{\mathrm{H}} - P^i_{\mathrm{C}} \right]^+ < \frac{\sigma^2}{h_i}\left(2^m-1\right) \right)\nonumber\\
	&= Pr\left( P^i_{\mathrm{H}}  <P^i_{\mathrm{C}}+ \frac{\sigma^2}{h_i}\left(2^m-1\right) \right)\nonumber\\
	&\le  \left(\prod_{n\ge 0} \left(1+\alpha  \frac{\Gamma(n+1,  \pi\rho\inf(R,\gamma_i^m)^2)}{n!}\right)\right)^{-1/\alpha},
\end{align}
where we have used Theorem~\ref{thm:estimationphi} and the fact that $2^m-1\ge 0$.
\end{theorem}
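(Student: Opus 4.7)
The plan is to reduce the transmission outage statement to a direct application of Theorem~\ref{thm:estimationphi}. First I would expand the definition of the maximum transmission rate: $C_i = W \log_2(1 + h_i P^i_{\mathrm{T}} / \sigma^2)$ with $P^i_{\mathrm{T}} = [P^i_{\mathrm{H}} - P^i_{\mathrm{C}}]^+$. Writing out $C_i < m$ and solving for $P^i_{\mathrm{T}}$ gives the event $[P^i_{\mathrm{H}} - P^i_{\mathrm{C}}]^+ < \frac{\sigma^2}{h_i}(2^{m/W}-1)$. (Note that as stated in the theorem the exponent is $2^m$, matching \eqref{eq:gammam}; I would simply follow the paper's normalization.)

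Next I would observe that because $m \ge 0$, the threshold $\frac{\sigma^2}{h_i}(2^m-1)$ is nonnegative, so the positive part is inactive on the event of interest: $P^i_{\mathrm{T}} < \frac{\sigma^2}{h_i}(2^m-1)$ is equivalent to $P^i_{\mathrm{H}} < P^i_{\mathrm{C}} + \frac{\sigma^2}{h_i}(2^m-1)$. This rewrites the outage event in exactly the form handled by Theorem~\ref{thm:estimationphi}, except that the threshold $P^i_{\mathrm{C}}$ is replaced by the larger quantity $P^i_{\mathrm{C}} + \frac{\sigma^2}{h_i}(2^m-1)$.

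Now I would apply Theorem~\ref{thm:estimationphi} verbatim to this modified threshold. Concretely, every step of the earlier proof (define $f(\mathbf{x}_k)$, bound the sum by requiring every individual term to be below the threshold, translate this into a distance constraint, recognize the resulting event as a void probability, and invoke \eqref{aeq} and the eigenvalue formula \eqref{eq:eigenvalues}) goes through unchanged upon replacing $\gamma_i$ with $\gamma_i^m$. Indeed, solving $f(\mathbf{x}_k) \le P^i_{\mathrm{C}} + \frac{\sigma^2}{h_i}(2^m-1)$ for $\|\mathbf{x}_k\|$ yields precisely $\|\mathbf{x}_k\| \ge \gamma_i^m - \epsilon$ with $\gamma_i^m$ defined in \eqref{eq:gammam}. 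Passing $\epsilon \to 0$ then gives the claimed Fredholm-determinant bound.

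There is essentially no obstacle here, only a bookkeeping point: one must check that $\gamma_i^m - \epsilon \ge 0$ so that the ball $\mathcal{B}(0, \gamma_i^m - \epsilon)$ is well-defined, but this is ensured by choosing $\epsilon$ small enough (and for $\gamma_i^m \le R$ the resulting infimum $\inf(R,\gamma_i^m)$ appears in the final expression, as in the previous theorem). The whole argument is therefore a one-line substitution into Theorem~\ref{thm:estimationphi}, which is why the paper's proof is compressed into a single chain of equalities and inequalities.
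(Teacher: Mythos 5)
Your proposal is correct and follows exactly the paper's own route: rewrite the event $C_i<m$ as $P^i_{\mathrm{H}}<P^i_{\mathrm{C}}+\frac{\sigma^2}{h_i}(2^m-1)$ using that the threshold is nonnegative, then apply Theorem~\ref{thm:estimationphi} with the enlarged threshold, which replaces $\gamma_i$ by $\gamma_i^m$. Your aside about the $2^{m}$ versus $2^{m/W}$ normalization correctly flags a minor inconsistency in the paper's statement, but does not affect the argument.
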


\begin{figure*}[t]
\begin{center}
$\begin{array}{ccc} 
\epsfxsize=2.3 in \epsffile{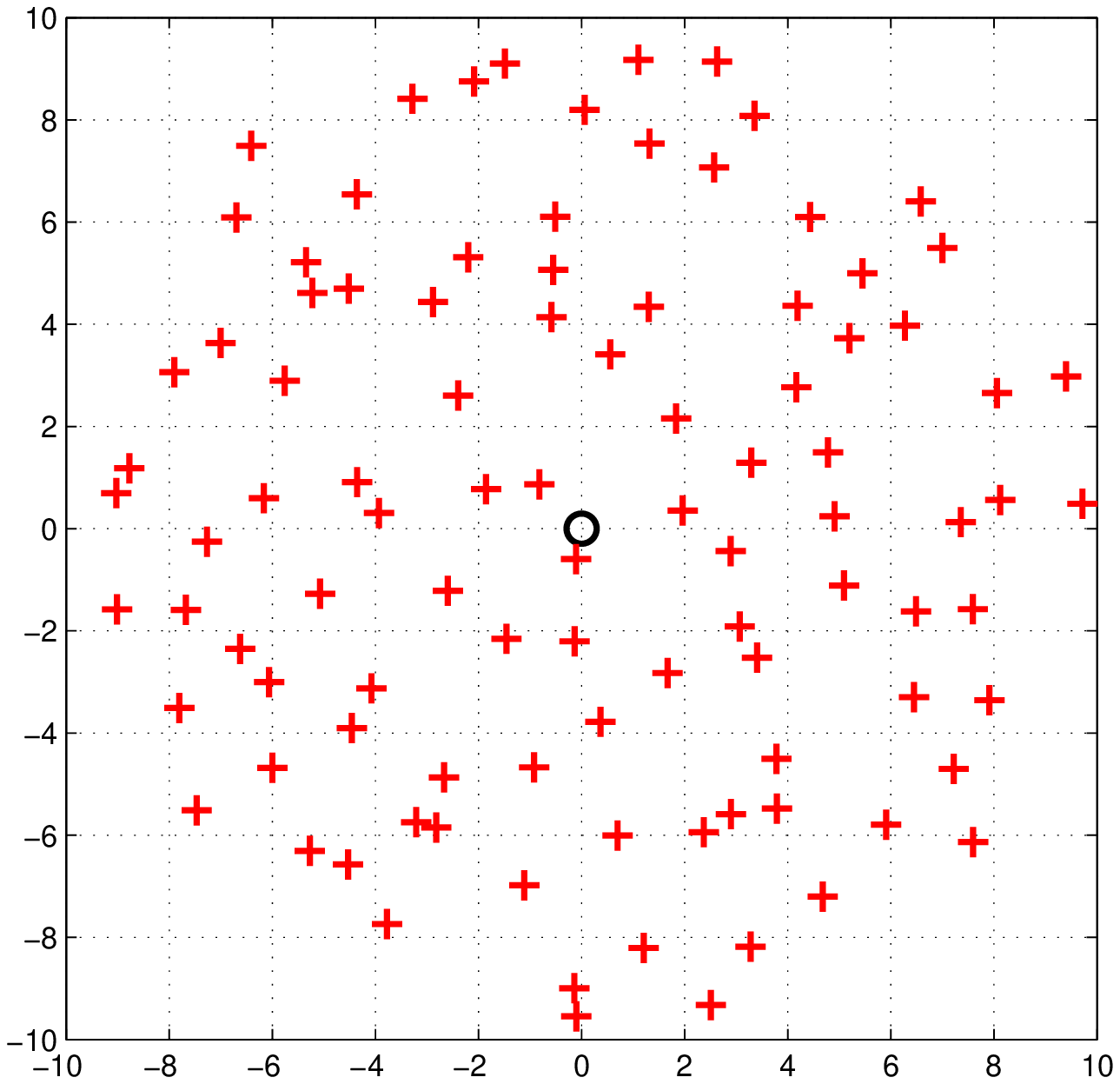}	&
\epsfxsize=2.3 in \epsffile{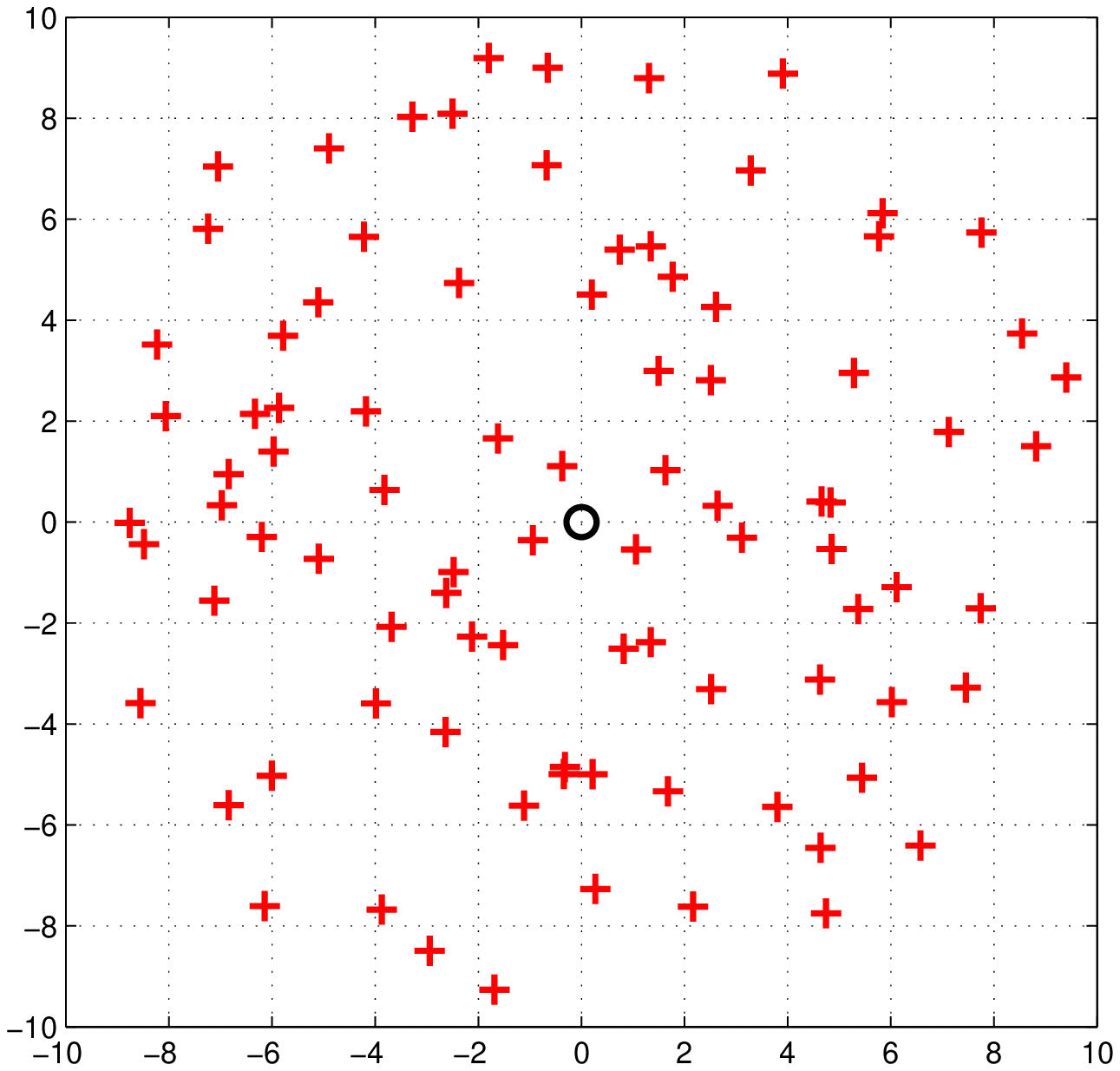}	&
\epsfxsize=2.3 in \epsffile{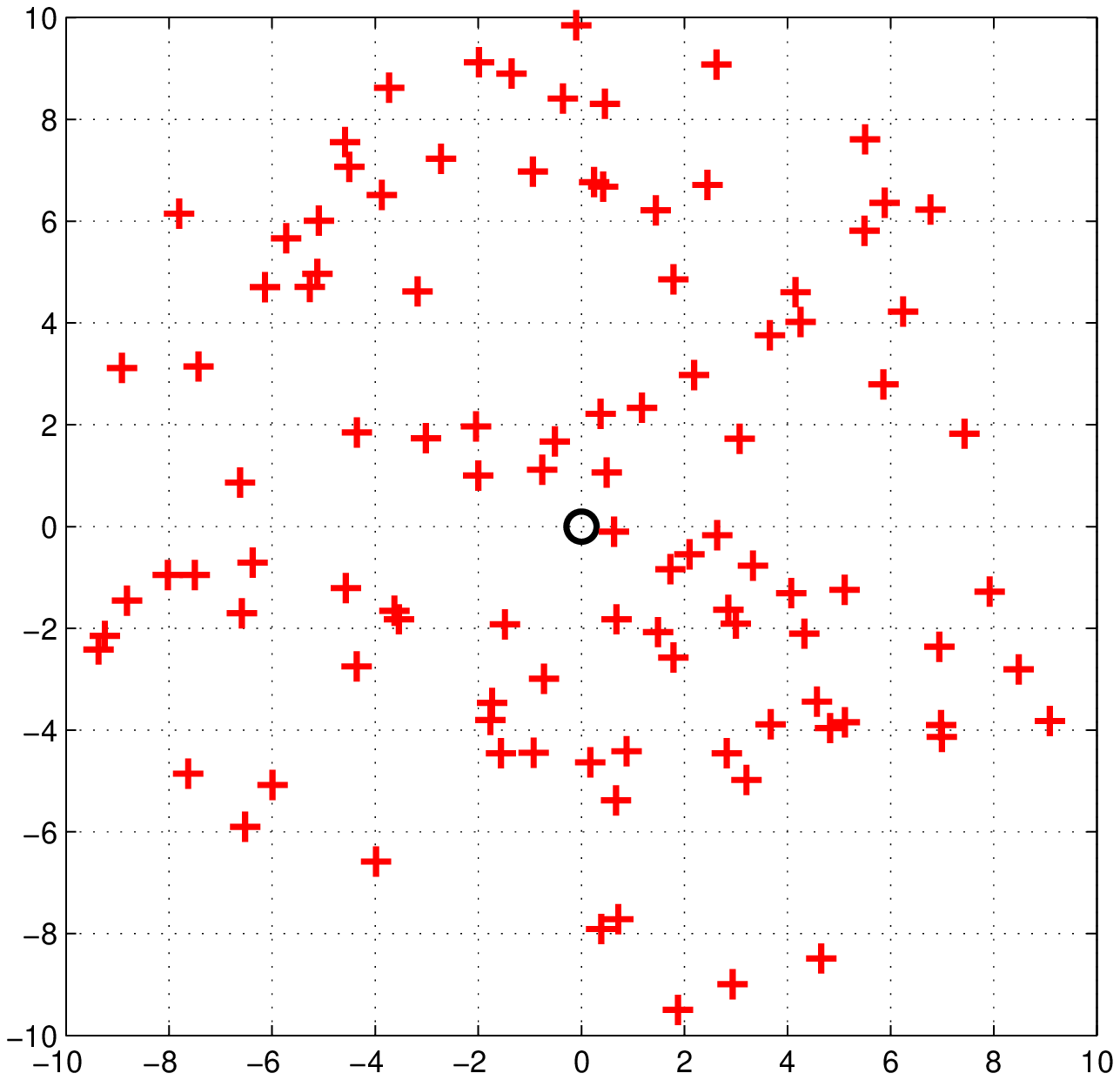}	\\ [-0.2cm]
(a)	& (b)	&	(c)
\end{array}$
\caption{Snapshots of the distribution of ambient RF energy sources (a) $\alpha=-1$ (b) $\alpha=-0.5$ (c) $\alpha=-0.03$.}
\label{Distribution}
\end{center}
\end{figure*} 

\subsection{Poisson point process}
We end this section by computing the previous results in the case of PPP. 
As a corollary of Theorem~\ref{thm:estimationphi}, we find in the case of a PPP (which is obtained as the limit as $\alpha\rightarrow 0$ in the theorem):
\begin{corollary}
Let $\mathcal{K}\sim\mathrm{Poiss}(O,\rho)$ be a Poisson process on $W=\mathcal{B}(0,R)$ with density $\rho$. Then, the following bound holds:
\begin{eqnarray} \label{ppp_po} 
Pr \left(	 P^i_{\mathrm{H}}	<	P^i_{\mathrm{C}}	\right) \le e^{-\pi\rho\inf(R,\gamma_i)^2},
\end{eqnarray}
where $\gamma_i$ is as defined in Theorem~\ref{thm:estimationphi}.
\end{corollary}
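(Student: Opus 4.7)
The plan is to follow the proof of Theorem~\ref{thm:estimationphi} up to the reduction to a void probability, and then invoke the classical PPP void probability formula in place of the Fredholm-determinant formula~\eqref{aeq}, rather than trying to pass to the limit $\alpha\to 0$ in \eqref{dpp_po}.

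First, I would define $f(\mathbf{x}_k)$ exactly as in the proof of Theorem~\ref{thm:estimationphi} and reproduce verbatim the chain of implications given there: because each summand is nonnegative, the event $\{P^i_{\mathrm{H}} < P^i_{\mathrm{C}}\}$ is contained in the event $\{f(\mathbf{x}_k) \le P^i_{\mathrm{C}}\text{ for every }k\in\mathcal{K}\}$, which by definition of $\gamma_i$ unwinds (for $\epsilon$ small enough that $\gamma_i-\epsilon\ge 0$) to $\{\mathcal{K}\cap\mathcal{B}(0,\gamma_i-\epsilon)=\emptyset\}$.

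Next, I would apply the standard PPP void probability, $Pr(\mathcal{K}\cap B = \emptyset) = e^{-\rho|B|}$ for any bounded Borel $B$, to $B = \mathcal{B}(0,\gamma_i-\epsilon)\cap W$. Since $W=\mathcal{B}(0,R)$, this set has Lebesgue measure $\pi \inf(R,\gamma_i-\epsilon)^2$, which yields the bound $e^{-\pi\rho\inf(R,\gamma_i-\epsilon)^2}$. Letting $\epsilon\to 0$ on the right-hand side by continuity gives the stated inequality.

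An alternative route is to pass directly to the limit $\alpha\to 0$ inside \eqref{dpp_po}: each factor $(1+\alpha\lambda_n)^{-1/\alpha}$ converges to $e^{-\lambda_n}$, and since $\sum_n \lambda_n = \sum_n \Gamma(n+1,\pi\rho\inf(R,\gamma_i)^2)/n!$ is the trace of the restricted Ginibre kernel, which equals $\rho$ times the Lebesgue measure of $\mathcal{B}(0,\inf(R,\gamma_i))$, i.e.\ $\pi\rho\inf(R,\gamma_i)^2$, the whole product would converge to $e^{-\pi\rho\inf(R,\gamma_i)^2}$. The main obstacle with this second route is justifying the interchange of the limit with the infinite product, which requires a uniform estimate on $|(1+\alpha\lambda_n)^{-1/\alpha}-e^{-\lambda_n}|$; for this reason I would prefer the direct proof via the PPP void formula, which also makes the corollary self-contained modulo classical facts about Poisson processes.
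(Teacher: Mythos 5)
Your proposal is correct, but your preferred route is not the one the paper takes. The paper gives no standalone proof of this corollary: it simply asserts that the Poisson case "is obtained as the limit as $\alpha\rightarrow 0$ in the theorem," i.e.\ it relies on your \emph{second} route, passing to the limit $(1+\alpha\lambda_n)^{-1/\alpha}\to e^{-\lambda_n}$ in \eqref{dpp_po} and using that the eigenvalue sum $\sum_{n\ge 0}\Gamma(n+1,a)/n! = a$ is the trace of the restricted kernel, which with $a=\pi\rho\inf(R,\gamma_i)^2$ gives the stated exponential. Your first route — rerunning the inclusion-of-events argument from Theorem~\ref{thm:estimationphi} down to the void probability $Pr(\mathcal{K}\cap\mathcal{B}(0,\gamma_i-\epsilon)=\emptyset)$ and then substituting the elementary Poisson void formula $e^{-\rho|B|}$ with $|B|=\pi\inf(R,\gamma_i-\epsilon)^2$ — is a genuinely different and arguably more rigorous derivation: it avoids both the interchange of limit and infinite product (which, as you note, needs a domination argument, though an easy one since $\lambda_n\le 1$ and $\sum_n\lambda_n<\infty$) and the implicit claim that the outage probability itself is continuous along the family of $\alpha$-DPPs as $\alpha\to 0$. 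What the paper's limit argument buys is economy — the corollary is read off directly from the already-proved bound — while your direct argument buys self-containedness modulo classical Poisson facts. Both yield the same formula, and your computation of the trace identity $\sum_{n\ge 0}\frac{1}{n!}\int_0^a e^{-t}t^n\,\mathrm{d}t = a$ correctly confirms that the two answers agree.
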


Similarly, we have, 
\begin{corollary}
When $\mathcal{K}$ is a Poisson process, the transmission outage probability can be estimated as follows:
\begin{eqnarray} \label{ppp_to}
	Pr \left(	R_i	<	m	\right)\le e^{-\pi\rho\inf(R,\gamma_i^m)^2},
\end{eqnarray}
where $\gamma_i^m$ is defined in \eqref{eq:gammam}. 
\end{corollary}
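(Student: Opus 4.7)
The plan is to obtain this corollary as a direct specialization of Theorem~3 (the transmission-outage bound for the Ginibre $\alpha$-DPP) by passing to the PPP limit $\alpha \to 0^-$, which recovers the Poisson point process as noted in Section~II.B. Equivalently, and perhaps more transparently, one may reprove it along the lines of the proof of Theorem~\ref{thm:estimationphi}, using the classical void-probability formula for a PPP in place of the Fredholm determinant identity \eqref{aeq}.

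First, I would reduce the transmission-outage event to an energy-threshold event exactly as in the proof of \eqref{eq:boundpsi}: the condition $R_i < m$ is equivalent to $P^i_{\mathrm{H}} < P^i_{\mathrm{C}} + (\sigma^2/h_i)(2^m - 1)$, so the same worst-case bounding argument (no single source contributes more than the threshold) yields
\begin{equation*}
Pr(R_i < m) \le Pr\bigl(\mathcal{K} \cap \mathcal{B}(0, \gamma_i^m - \epsilon) = \emptyset\bigr),
\end{equation*}
where $\epsilon>0$ is chosen so that $\gamma_i^m - \epsilon \ge 0$. This step is identical to the corresponding reduction in Theorem~3 and requires no new ideas.

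Next, I would apply the well-known Poisson void probability formula $Pr(\mathcal{K}\cap B=\emptyset)=e^{-\rho|B\cap O|}$ with $B=\mathcal{B}(0,\gamma_i^m-\epsilon)$, which gives the bound $e^{-\pi\rho \inf(R,\gamma_i^m-\epsilon)^2}$, and then let $\epsilon\to 0$ by continuity to obtain the stated estimate. As a consistency check, I would verify the same result by taking $\alpha\to 0^-$ inside \eqref{eq:boundpsi}: the infinite product collapses to $\exp\bigl(-\sum_{n\ge 0}\Gamma(n+1,\pi\rho\inf(R,\gamma_i^m)^2)/n!\bigr)$, and the elementary identity $\sum_{n\ge 0}\Gamma(n+1,a)/n! = a$ (proved by interchanging sum and integral, since $\sum_n t^n/n!=e^t$) reduces the exponent to $\pi\rho\inf(R,\gamma_i^m)^2$, matching the first route.

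The only step requiring a moment of care is the $\alpha\to 0^-$ passage inside the infinite product, which one would normally justify via monotone convergence after taking logarithms. The void-probability route bypasses this difficulty entirely, so I expect no serious obstacle; the proof is, in essence, a one-line application of the companion corollary for power outage with $P^i_{\mathrm{C}}$ replaced by $P^i_{\mathrm{C}} + (\sigma^2/h_i)(2^m-1)$.
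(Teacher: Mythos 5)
Your proposal is correct, and its second route (passing to the limit $\alpha\to 0^-$ in \eqref{eq:boundpsi}) is exactly how the paper obtains this corollary --- the paper gives no separate proof, simply presenting it as the Poisson limit of the DPP bound, in parallel with the power-outage corollary. Your computation of that limit is right: the product collapses to $\exp(-\sum_{n\ge 0}\lambda_n)$ and the identity $\sum_{n\ge 0}\Gamma(n+1,a)/n!=a$ (valid for the \emph{lower} incomplete Gamma function \eqref{eq:defgamma} used here, by Fubini and $\sum_n t^n/n!=e^t$) gives the exponent $\pi\rho\inf(R,\gamma_i^m)^2$. Your first route --- reducing $\{R_i<m\}$ to $\{P^i_{\mathrm{H}}<P^i_{\mathrm{C}}+\frac{\sigma^2}{h_i}(2^m-1)\}$, bounding by the void event on $\mathcal{B}(0,\gamma_i^m-\epsilon)$, and invoking the Poisson void probability $e^{-\rho|B\cap O|}$ before letting $\epsilon\to 0$ --- is not the paper's stated derivation but is cleaner: it avoids justifying the interchange of limit and infinite product entirely and makes the corollary self-contained for readers who only care about the PPP case. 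Both routes are sound; the only caveat, which is the paper's and not yours, is the implicit assumption $\gamma_i^m\ge\epsilon$ so that the ball $\mathcal{B}(0,\gamma_i^m-\epsilon)$ is well defined.
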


\section{Numerical Results}
The results in this section are obtained based on the following values of parameters unless specified otherwise. Both the transmitting antenna gain $G^{k}_{S}$ and receiving antenna gain $G^{i}_{S}$ are set to be $1.5$. The RF-to-DC power conversion efficiency $\beta_{i}$ is considered to be $30\%$. We assume that all the ambient RF energy sources are LTE-enabled mobiles operating with $P^{k}_{S}=1W$ transmit power on the typical $1800MHz$ frequency. The corresponding wavelength $\lambda_{k}$ is $0.167m$. The circuit power consumption $P^{i}_{C}$ is fixed to be $-18dBm$ (i.e., $15.8\mu W$) as in~\cite{NParks}. The noise power is $-90dBm$, (i.e., $10^{-12}W$). The transmission bandwidth is set to be $1kHz$. The channel gain between the sensor and data sink is calculated as \cite{XTVTlu}: $h_{i}=62.5d^{-4}$, where $d$, assumed to be $50$ meters, is the distance between the sensor node and the data sink.


\begin{figure*}[t]
\begin{center}
$\begin{array}{ccc} 
\epsfxsize=3.5 in \epsffile{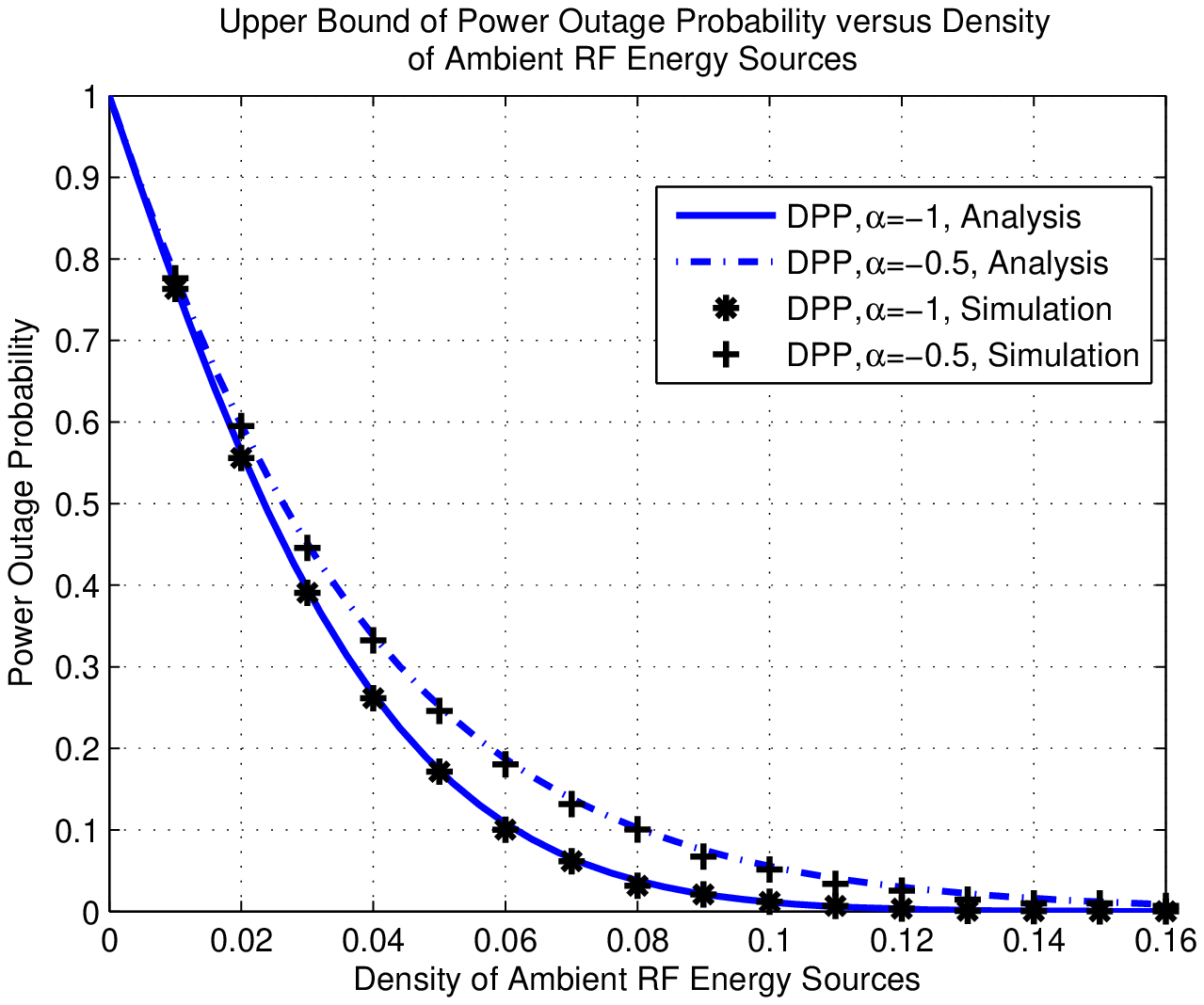}	&
\epsfxsize=3.5 in \epsffile{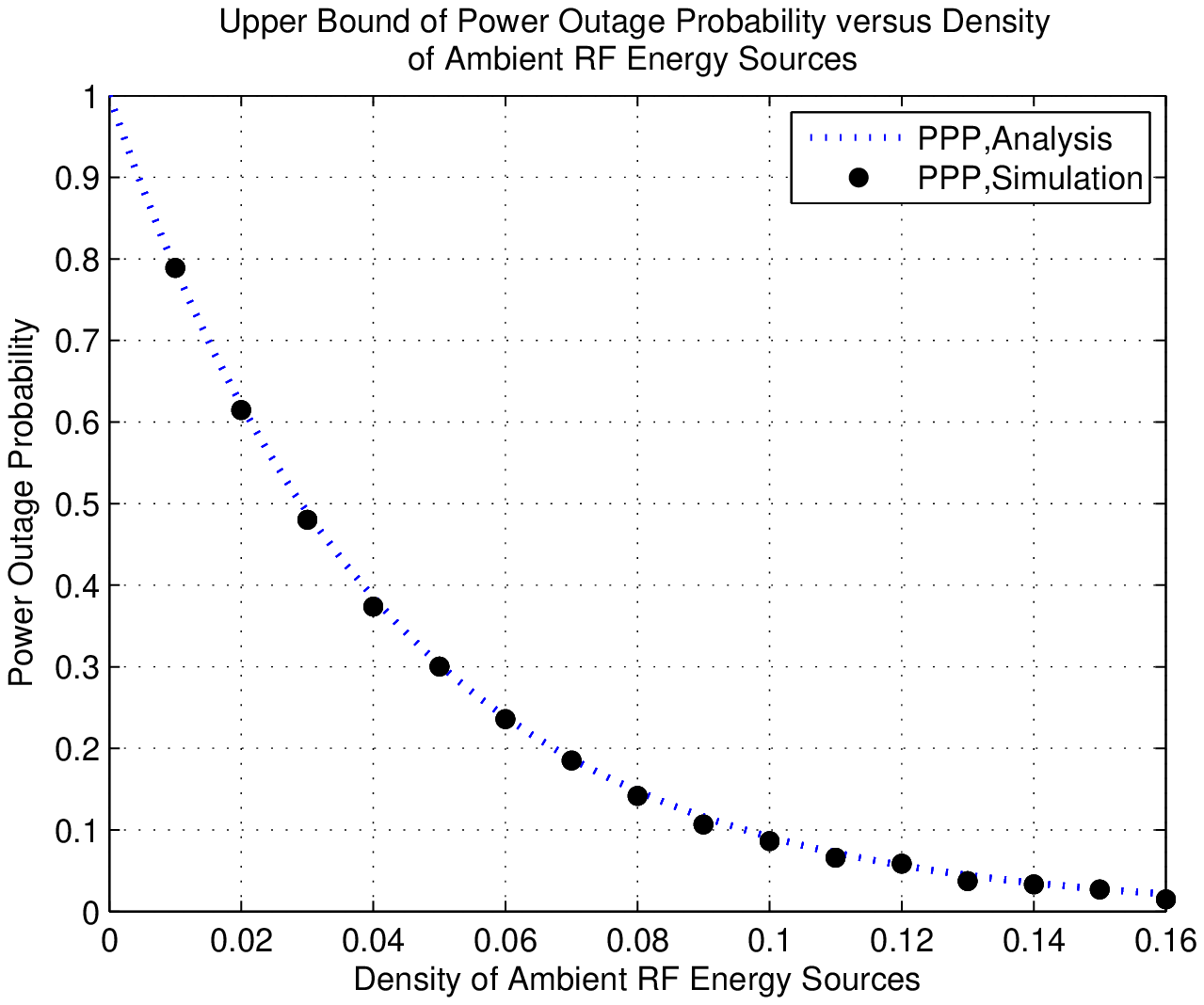}	\\ [-0.2cm]
(a)	& (b) 
\end{array}$
\caption{Upper bound of an power outage probability versus density of ambient RF energy sources for (a) DPP and (b) PPP.}
\label{Power_outage}
\end{center}
\end{figure*} 

\begin{figure*}[t]
\begin{center}
$\begin{array}{ccc} 
\epsfxsize=3.4 in \epsffile{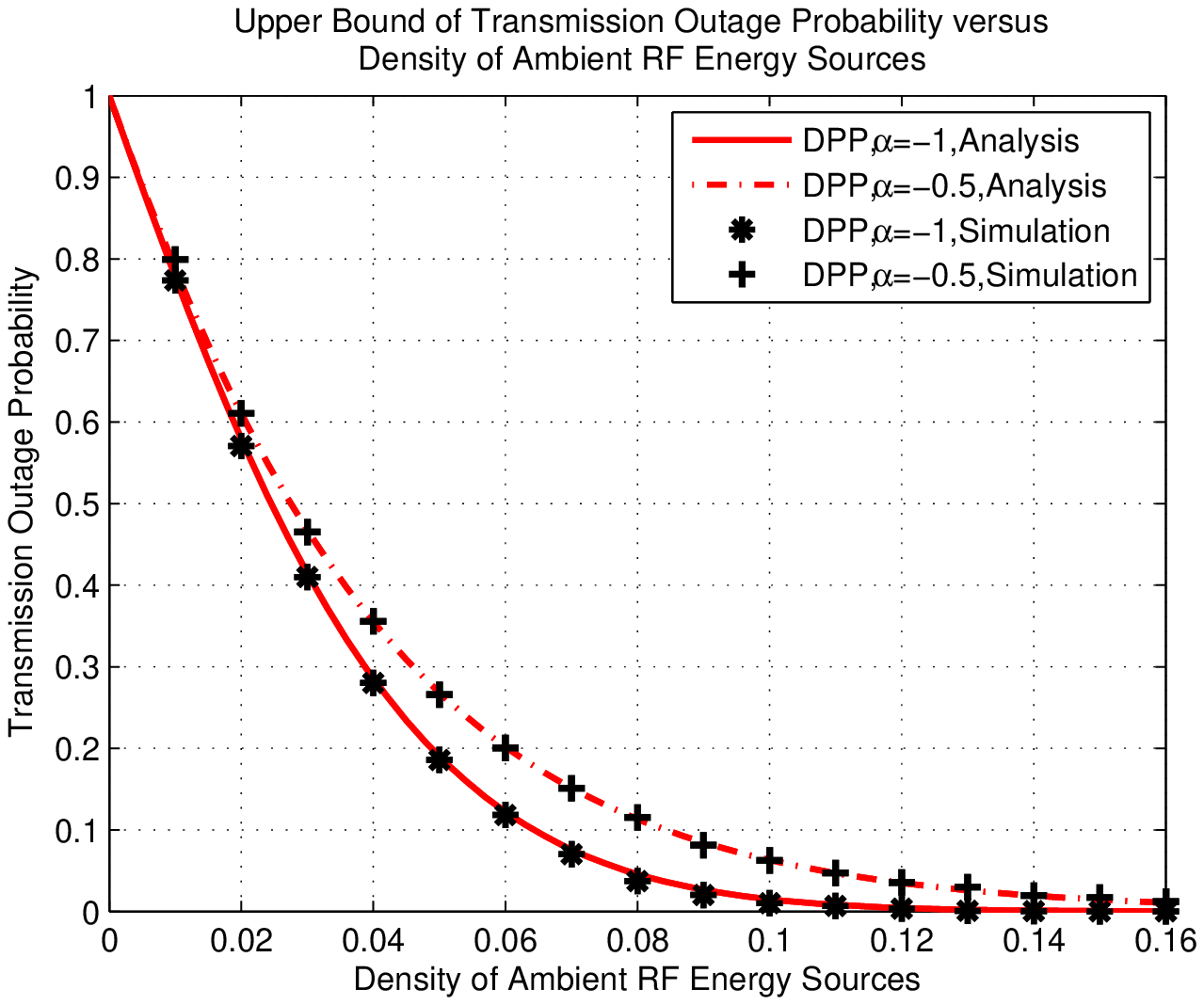}	&
\epsfxsize=3.4 in \epsffile{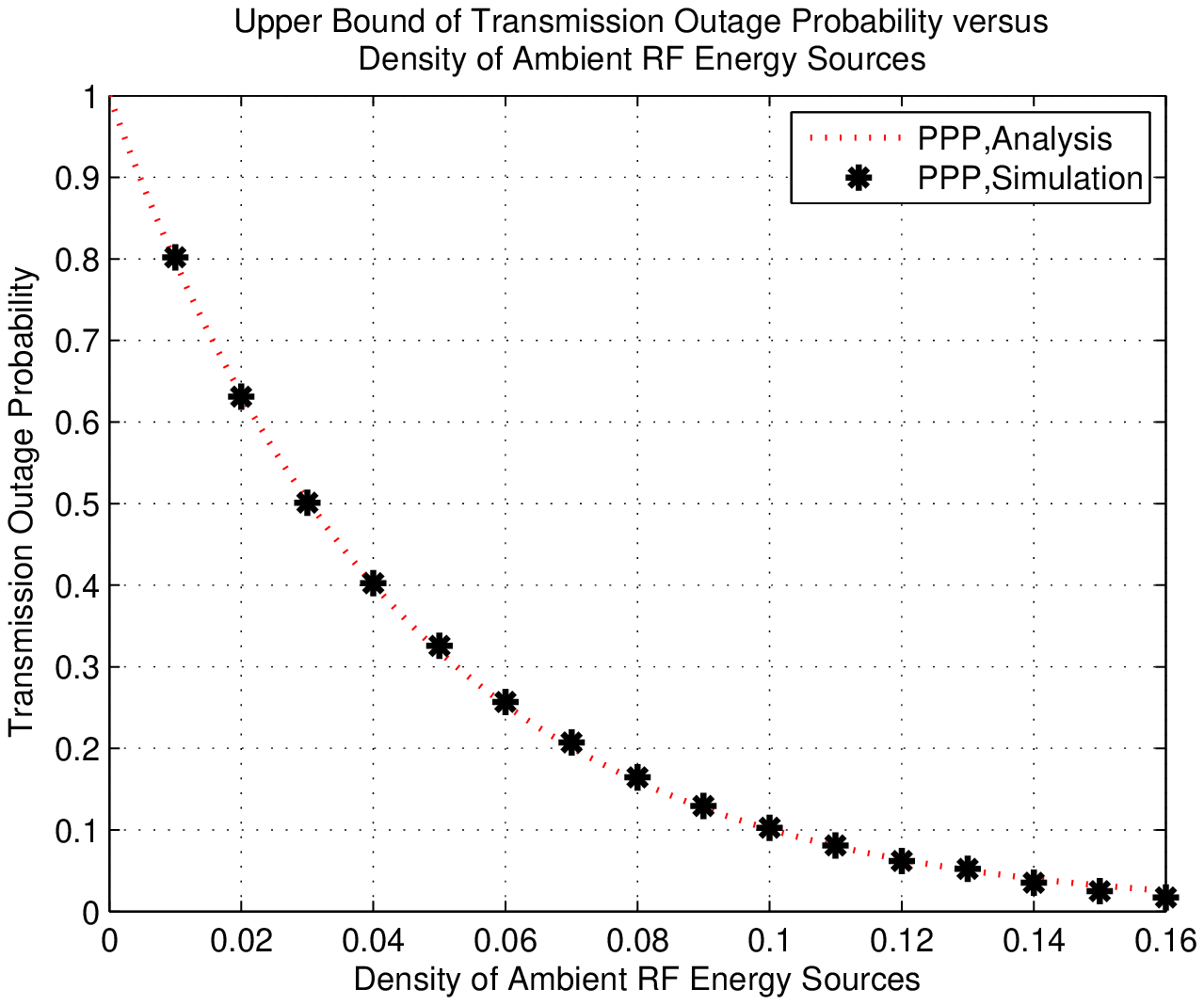}	\\ [-0.2cm]
(a)  	& (b)   
\end{array}$
\caption{Upper bound of transmission an outage probability versus density of ambient RF energy sources for (a) DPP  and (b) PPP ($m=3$).}
\label{Transmission_outage}
\end{center}
\end{figure*} 

In Fig.~\ref{Distribution}, we show some snapshots of the distribution of ambient RF sources in a circle area with the radius $R=10m$, when the RF source density is $1$. It is illustrated that strong repulsion exists between the RF sources when $\alpha=-1$. As a result, the RF sources tend to locate uniformly over the area. We can observe that the repulsion keeps decreasing when $\alpha$ is more approaching zero. When $\alpha=-0.03$, the RF sources 
show clustering, which is a feature of PPP.

Figure~\ref{Power_outage} shows the upper bound of power outage probability versus the density of ambient RF sources. It is verified that the analytical expressions in (\ref{dpp_po}) and (\ref{ppp_po}) are very tight. It can be observed in Fig.~\ref{Power_outage}(a) that the upper bound of an power outage probability increases with $\alpha$. Compared between Fig.~\ref{Power_outage}(a) and Fig.~\ref{Power_outage}(b), we observe that when the DPP tends to be a PPP, i.e. repulsion becomes reduced to a minimum, the probability that the sensor becomes inactive is higher. The power outage probability is the lowest when $\alpha=-1$, i.e. when the model exhibits maximal repulsion. 

We demonstrate the upper bound of transmission outage probability versus the density of ambient RF sources in Fig.~\ref{Transmission_outage}. It can be seen that the analytical expressions in (\ref{eq:boundpsi}) and (\ref{ppp_to}) are very accurate. Compared between Fig.~\ref{Transmission_outage}(a) and~\ref{Transmission_outage}(b), we can make a similar observation that when the DPP tends to be a PPP, the probability that the sensor fails to reach the minimum rate requirement is higher. Finally, we can conclude that given a certain density of ambient RF sources, the sensor achieves better performance when the distribution of ambient RF sources shows stronger repulsion/less attraction.


\section{Conclusion}

This paper has presented the performance analysis of a wireless sensor powered by ambient RF energy through adopting a stochastic-geometry approach. We have analyzed the cases when the ambient RF sources are geographically distributed as a Ginibre $\alpha$-DPP, whose special case is a Poisson point process. We have derived the expression of expected RF energy harvesting rate. We have also characterized the worst-case performance of a sensor node in terms of the upper bounds of power outage and transmission outage probabilities. Numerical results have shown that all the simulation results agree with the corresponding analytical results, which leads us to believe that the upper-bounds are usable in practice. We have additionally found that given a certain density of ambient RF sources, the sensor achieves better performance when the distribution of ambient RF sources shows stronger repulsion/less attraction. Our future work will extend the performance analysis from an individual node level to a system level.

\end{document}